\newtheorem{thm}{Theorem}
\newtheorem{lem}{Lemma}
\theoremstyle{definition}
\newtheorem{defn}{Definition}
\newtheorem{rem}{Remark}
\newtheorem{example}{Example}
\newcommand{\calX}{\mathcal{X}}
\newcommand{\calY}{\mathcal{Y}}
\newcommand{\calZ}{\mathcal{Z}}
\newcommand{\calU}{\mathcal{U}}
\newcommand{\Reals}{\mathbb{R}}
\newcommand{\defined}{\triangleq}
\begin{document}
\title{The Utility Cost of Robust Privacy Guarantees}

\author{\authorblockN{Hao Wang\authorrefmark{1}, Mario Diaz\authorrefmark{1}\authorrefmark{2}, Flavio~P.~Calmon\authorrefmark{1} and Lalitha Sankar\authorrefmark{2}}\authorblockA{\authorrefmark{1}Harvard University, \{hao\_wang,mdiaztor\}@g.harvard.edu, flavio@seas.harvard.edu}\authorblockA{\authorrefmark{2}Arizona State University, \{mdiaztor,lsankar\}@asu.edu}\thanks{This material is based upon work supported by the National Science Foundation under Grant No. CCF-1350914 and an ASU seed grant.}}

\maketitle
\begin{abstract}
Consider a data publishing setting for a data set with public and private features. The objective of the publisher is to maximize the amount of information about the public features in a revealed data set, while keeping the information leaked about the private features bounded. The goal of this paper is to analyze the performance of privacy mechanisms that are constructed to match the distribution learned from the data set. Two distinct scenarios are considered: (i) mechanisms are designed to provide a privacy guarantee for the learned distribution; and (ii) mechanisms are designed to provide a privacy guarantee for every distribution in a given neighborhood of the learned distribution. For the first scenario, given any privacy mechanism, upper bounds on the difference between the privacy-utility guarantees for the learned and true distributions are presented. In the second scenario, upper bounds on the reduction in utility incurred by providing a uniform privacy guarantee are developed.
\end{abstract}


\section{Introduction}

The disclosure of data with both privacy and utility guarantees is a recognized objective in many applications. A common approach to this problem is to process the data set through a \textit{privacy mechanism} that seeks to  fulfill certain privacy and utility guarantees. Information theoretic methods for designing privacy mechanisms often rely on the implicit assumption that the data distribution is, for the most part, known  \cite{du2012privacy,sankar2013utility,asoodeh2016information,wang2017estimation}. However, in practice, the data distribution may only be accessed through a limited number of observed samples.

In this work, we revisit this assumption, and study the robustness of privacy and utility guarantees of information-theoretic privacy mechanisms to partial knowledge of the input distribution. In practice, this inaccuracy stems from the  limited availability of samples which, in turn, produces a discrepancy between the learned and the true data distribution. To mitigate the effect of this discrepancy, we also study the performance of privacy mechanisms that, by design, are robust: they assure privacy for {\it every} data set drawn from a distribution within a neighborhood. Here, the neighborhood is given by an $\ell_1$-ball of radius $r\geq0$ around a distribution estimated from a limited number of samples. Our analysis can be applied when  privacy and utility  are measured in terms of a broad range of metrics based on $f$-divergences, or by probability of correct guessing.

Due to  its natural interpretation and simplicity, we start our analysis by letting $r=0$. This corresponds to the {\it pointwise} setting, where the privacy mechanism is fixed, and its performance is evaluated in terms of a single distribution learned from data. 
We provide bounds on the gap between the privacy-utility guarantees computed under the empirical distribution and the \textit{de facto} guarantees for the true data distribution. This gap depends on the number of observed samples and properties of the data  (e.g. support size, probability of least likely symbol), and improves and generalizes the results presented by Wang and Calmon in \cite{wang2017estimation}.

We then extend our analysis to the more general {\it uniform} setting. Here, a given level of privacy is uniformly assured for all data sets drawn from distributions within a neighborhood $r>0$ of a target distribution (potentially learned from data). Using large deviation results, we establish upper bounds on the reduction in utility due to the uniform privacy guarantee which, in turn, depend on the value of $r$. 

The paper is organized as follows. In Section~\ref{Subsection:PUTs} we recall the framework of privacy-utility trade-offs. The formal definitions for the uniform privacy guarantees are introduced in Section~\ref{Subsection:RobustPUT}. In Section~\ref{Subsection:EmpiricalDistribution} we recall basic results from large deviation theory related to the distance between the empirical and true distributions. The definitions of $f$-informations and probability of correct guessing are recalled in Section~\ref{Subsection:fInformationPc}. Our main results for the pointwise and uniform results are presented in Sections~\ref{Section:PointwiseRobustness} and \ref{Section:UniformRobustness}, respectively. 

\section{Problem Setup and Preliminaries}
\subsection{Privacy-Utility Trade-Offs}
\label{Subsection:PUTs}

Suppose that $S$ is a variable to be hidden (e.g. political preference) and $X$ is an observed variable  (e.g. movie ratings) that is correlated with $S$. In order to receive some utility (e.g. personalized recommendations), we would like to disclose as much information about $X$ without compromising $S$. An approach with rigorous privacy guarantees is to release a new random variable $Y$ produced by applying a randomized mapping to $X$. This mapping, called the privacy mechanism, is designed to fulfill a certain privacy constraint.

In the sequel we assume that $S$ and $X$ are discrete and let $P_{S,X}$ denote their joint distribution. The support of $Y$ can be any discrete set. We let $\mathcal{L}(P_{S,X},P_{Y|X})$ and $\mathcal{U}(P_{S,X},P_{Y|X})$ be the privacy leakage and the utility generated by a mapping $P_{Y|X}$ for the underlying distribution $P_{S,X}$, respectively. Throughout this paper, specific instantiations of $\mathcal{L}$ and $\mathcal{U}$ are $f$-informations and probability of correctly guessing. The following definition captures the fundamental trade-off between privacy and utility in the present setting.

\begin{defn}
\label{defn:PUT}
For a given joint distribution $P_{S,X}$ and $\epsilon\geq0$, the {\it privacy-utility function} is defined as
\begin{equation}
\mathsf{H}(P_{S,X};\epsilon) \defined \sup_{P_{Y|X}\in\mathcal{D}(P_{S,X};\epsilon)} \mathcal{U}(P_{S,X},P_{Y|X}),
\end{equation}
where $\mathcal{D}(P_{S,X};\epsilon) \defined \{ P_{Y|X} : \mathcal{L}(P_{S,X},P_{Y|X}) \leq \epsilon\}$.
\end{defn}

This type of privacy-utility trade-off (PUT) has been investigated for several measures of privacy and utility, see, for example, \cite{asoodeh2016information,asoodeh2017estimation,wang2017estimation}. When the distribution $P_{S,X}$ is known, the privacy-utility function in Definition \ref{defn:PUT} quantifies the best utility achievable by any privacy mechanism providing the desired privacy guarantee. In practice, the designer may not have access to the true distribution $P_{S,X}$, but only to independent samples $\{(s_i,x_i)\}_{i=1}^n$ drawn from this distribution. In this case, the privacy-utility guarantees for a distribution learned from the samples, say $\hat{P}_{S,X}$, and the true distribution $P_{S,X}$ might be different. For any given privacy mechanism $P_{Y|X}$, these discrepancies are effectively quantified by
\begin{equation}
|\mathcal{L}(P_{S,X},P_{Y|X}) - \mathcal{L}(\hat{P}_{S,X},P_{Y|X})|,
\end{equation}
and
\begin{equation}
|\mathcal{U}(P_{S,X},P_{Y|X}) - \mathcal{U}(\hat{P}_{S,X},P_{Y|X})|.
\end{equation}

\subsection{Uniform Privacy Guarantees}
\label{Subsection:RobustPUT}

When privacy is a priority, a specific privacy guarantee for the true distribution $P_{S,X}$ may still be required, even though the designer has only access to a distribution $\hat{P}_{S,X}$ estimated from the samples $\{(s_i,x_i)\}_{i=1}^n$. We propose the following procedure to overcome this difficulty: (a) use large deviation theory results to find an upper bound, say $r$, for the distance between $\hat{P}_{S,X}$ and $P_{S,X}$; (b) provide a privacy guarantee for {\it all} distributions at distance less or equal than $r$ from the $\hat{P}_{S,X}$. In the sequel, we measure the distance between two probability distributions $P$ and $Q$ by their $\ell_1$-distance,
\begin{equation}
    \|P - Q\| \defined \sum_{z\in\calZ} |P(z) - Q(z)|.
\end{equation}
With this notation, we introduce the following definition.

\begin{defn}
\label{def:robust-utility-privacy}
Given $\hat{P}_{S,X}$, $\epsilon\geq0$, and $r\geq0$, we define
\begin{equation}
    P_{Y|X}^*(\hat{P}_{S,X};\epsilon,r) \defined \operatorname*{arg\,max}_{P_{Y|X}\in \mathcal{D}(\hat{P}_{S,X};\epsilon,r)} \calU_r(\hat{P}_{S,X},P_{Y|X}),
\end{equation}
where $\mathcal{D}(\hat{P}_{S,X};\epsilon,r)$ is the set of all mechanisms $P_{Y|X}$ such that $\mathcal{L}(Q_{S,X},P_{Y|X}) \leq \epsilon$ for all $Q_{S,X}$ with $\| \hat{P}_{S,X} - Q_{S,X}\| \leq r$, and
\begin{equation}
\label{eq:DefRobustPUT}
    \calU_r(\hat{P}_{S,X},P_{Y|X}) \defined \hspace{-5pt} \inf_{Q_{S,X} : \| \hat{P}_{S,X} - Q_{S,X}\| \leq r} \hspace{-5pt} \mathcal{U}(Q_{S,X},P_{Y|X}).
\end{equation}
\end{defn}

For a given privacy mechanism $P_{Y|X}$, the infimum in \eqref{eq:DefRobustPUT} equals the worst case utility attained by $P_{Y|X}$ over all the distributions $Q_{S,X}$ at a distance less than or equal to $r$ from $\hat{P}_{S,X}$. Thus, by definition, $P_{Y|X}^*(\hat{P}_{S,X};\epsilon,r)$ is the privacy mechanism with the best worst-case performance among all privacy mechanisms which ensure an $\epsilon$-privacy guarantee for \emph{all} the distributions at a distance less than or equal to $r$ from $\hat{P}_{S,X}$. In this context, it is natural to investigate the utility degradation incurred by providing such a robust privacy guarantee. For this matter, we introduce the {\it uniform utility-degradation} function as follows.
\begin{defn}
Given $\hat{P}_{S,X}$, $P_{S,X}$, $\epsilon\geq0$, and $r\geq0$, we define
\begin{equation}
\label{eq:DefDelta}
    \Delta(P_{S,X},\hat{P}_{S,X};\epsilon,r) \defined \mathsf{H}(P_{S,X};\epsilon)-\calU(P_{S,X},P_{Y|X}^*),
\end{equation}
where $P_{Y|X}^* = P_{Y|X}^*(\hat{P}_{S,X};\epsilon,r)$.
\end{defn}
Note that when $r=0$, \eqref{eq:DefDelta} measures the utility degradation due to the mismatched estimation; while for $r>0$, \eqref{eq:DefDelta} quantifies the utility degradation incurred by the uniform privacy guarantee and the mismatched estimation.

\subsection{Distance between the Estimated and True Distribution}
\label{Subsection:EmpiricalDistribution}

The distance between the learned and true distributions is the superposition of several errors, for example, estimation error, observation and sampling errors, etc. All these effects can be incorporated into the parameter $r$. Due to space constraints, here we deal only with the estimation error.

A result by Devroye \cite[Lemma~3]{devroye1983equivalence} establishes that, for every $\epsilon\geq\sqrt{20k/n}$,
\begin{equation}
    \Pr\left(\sum_{i=1}^k |V_i - n p_i| > n\epsilon\right) \leq 3 \exp\left(-\frac{n}{25}\epsilon^2\right),
\end{equation}
where $(V_1,\ldots,V_k)$ is a multinomial $(n,p_1,\ldots,p_k)$ random vector. Note that the empirical distribution is a (normalized) multinomial random vector. Hence, by taking $k=M \defined |\mathcal{S}||\mathcal{X}|$ and $\epsilon = \lambda \sqrt{20M/n}$ with $\lambda\geq1$, Devroye's lemma implies that, with probability at least $1-\beta_\lambda$,
\begin{equation}
\label{equa:L1bound_Devroye}
    \|\hat{P}_{S,X} - P_{S,X}\| \leq \lambda \sqrt{\frac{20M}{n}},
\end{equation}
where $\beta_\lambda \defined 3\exp(-4\lambda^2M/5)$ and $\hat{P}_{S,X}$ is the empirical distribution obtained from $\{(s_i,x_i)\}_{i=1}^n$. Even though in this paper we focus on large deviation results, it is worth pointing out that the order $O(\sqrt{M/n})$ is present in other fundamental settings, e.g., the minimax expected loss framework in \cite[Cor.~9]{kamath2015learning}.

\subsection{$f$-Informations and Probability of Correct Guessing}
\label{Subsection:fInformationPc}

We briefly introduce a few definitions that will be used for privacy and utility metrics in the rest of the paper. Let $f:(0,\infty)\to \Reals$ be a convex function with $f(1)=0$. The $f$-divergence between two probability distributions $P$ and $Q$ with $P \ll Q$ is given by \cite{ciszar1967information}
\begin{equation}
D_f(P \| Q)\defined \sum_x Q(x) f\left(\frac{P(x)}{Q(x)}\right).
\end{equation}
More recent developments about the properties of $f$-divergences can be found in \cite{raginsky2016strong,calmon2017strong} and the references therein. With this notation, the $f$-information between two discrete random variables $U$ and $V$ is defined by
\begin{equation}
\begin{aligned}
I_f(P_{U,V}) &\defined D_f(P_{U,V} \| P_U P_V)\\
&=\sum_{u,v} P_{U}(u)P_{V}(v) f\left(\frac{P_{U,V}(u,v)}{P_U(u)P_V(v)}\right).
\end{aligned}
\end{equation}

Also, the probability of correctly guessing $U$, with no additional information, is given by \cite{fehr2014conditional}
\begin{equation}
    P_c(U) \defined \max_{u\in\mathcal{U}} \Pr(U=u).
\end{equation}
Similarly, the probability of correctly guessing $U$ given $V$ is
\begin{equation}
    P_c(U|V) \defined \sum_{v\in\mathcal{V}} \max_{u\in\mathcal{U}} \Pr(U=u,V=v).
\end{equation}
\section{Main Results}

\subsection{Pointwise Privacy Guarantees}
\label{Section:PointwiseRobustness}

Now we study the discrepancy between the guarantees provided for the empirical and true distributions by any fixed mechanism when both privacy and utility are measured using \mbox{$f$-informations}. For space brevity, all the results in this section are stated using the same $f$-information to measure both privacy and utility. It can be shown, {\it mutatis mutandis}, that they hold true also when privacy and utility are measured using different $f$-informations.

The main result of this section is based on the following two technical lemmas. Before stating them, we recall the following definitions. For a given function $g:[0,\infty)\to \Reals$ and $u>0$, we let
\begin{equation}
K_{g,u} \defined \sup \{|g(x)| : x\in[0,u^{-1}]\}.
\end{equation}
The constant $K_{g,u}$ is the so-called supremum norm of $g$ on $[0,u^{-1}]$. In addition, if $g$ is Liptschitz on $[0,u^{-1}]$, we let $L_{g,u}$ be its Lipschitz constant, i.e.,
\begin{equation}
    \min \{L\geq0 : |g(x)-g(y)| \leq L|x-y|, \forall x,y\in[0,u^{-1}]\}.
\end{equation}
A function $g:[0,\infty)\to\mathbb{R}$ is called locally Lipschitz if, for every $t\geq0$, it is Lipschitz on $[0,t]$ with a Lipschitz constant that may depend on $t$. For example, the function $g(x) = x^2$ is locally Lipschitz but not Lipschitz.

\begin{lem}
\label{lem:rob_f_inf1}
Suppose that $S_i \rightarrow X_i \rightarrow Y_i$ for $i=1,2$ and $P_{Y_1|X_1}=P_{Y_2|X_2}$. Let $m_S \defined \min\{P_{S_i}(s) : s\in\mathcal{S},i\in\{1,2\}\}$ and $m_X \defined \min\{P_{X_i}(x) : x\in\mathcal{X},i\in\{1,2\}\}$. For notational simplicity, let
\begin{align}
    \Delta_L &\defined |I_f(P_{S_1,Y_1})-I_f(P_{S_2,Y_2})|,\\
    \Delta_U &\defined |I_f(P_{X_1,Y_1})-I_f(P_{X_2,Y_2})|.
\end{align}
If $f:[0,\infty)\to\Reals$ is locally Lipschitz, then, for all $\delta>0$,
\begin{equation*}
    \Delta_L \leq \begin{cases} A_f |\mathcal{S}| \delta + B_{f,\delta} \|P_{S_1,X_1}-P_{S_2,X_2}\| & m_S < \delta, \\ C_{f,m_S} ||P_{S_1,X_1}-P_{S_2,X_2}|| & \delta \leq m_S,\end{cases}
\end{equation*}
\begin{equation*}
     \Delta_U \leq \begin{cases} A_f |\mathcal{X}| \delta + B_{f,\delta} \|P_{S_1,X_1}-P_{S_2,X_2}\| & m_X < \delta, \\ C_{f,m_X} ||P_{S_1,X_1}-P_{S_2,X_2}|| & \delta \leq m_X,\end{cases}
\end{equation*}
where $A_f= 4 K_{f,m_X}$,
\begin{equation}
    B_{f,\delta} = K_{f,m_X} + 2K_{f,\delta} + (2\delta^{-1} + 1) L_{f,\delta},
\end{equation}
and $C_{f,u} = 2 K_{f,m_X} + (2u^{-1} + 1) L_{f,m_X}$ with $u\in\{m_S,m_X\}$.
\end{lem}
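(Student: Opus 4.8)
The plan is to write each $f$-information in its ``ratio form'', $I_f(P_{S,Y}) = \sum_{s,y} P_S(s)P_Y(y)\,f\!\big(r(s,y)\big)$ with $r(s,y) \defined P_{S,Y}(s,y)/(P_S(s)P_Y(y))$, and to bound $\Delta_L$ (and, mutatis mutandis, $\Delta_U$) by a telescoping that decouples the perturbations of the three ingredients $P_S$, $P_Y$, and $r$. Two preliminary facts set this up. (i) Every argument of $f$ that occurs lies in $[0,m_X^{-1}]$: since $Y_i$ is produced from $X_i$ through the common channel $P_{Y|X}$, we have $P_{Y_i}(y)\geq P_{X_i}(x)P_{Y|X}(y|x)\geq m_X P_{Y|X}(y|x)$, hence $r_i(s,y)=\sum_x P_{X_i|S_i}(x|s)\,P_{Y|X}(y|x)/P_{Y_i}(y)\leq m_X^{-1}$, while for $\Delta_U$ the argument is literally $P_{Y|X}(y|x)/P_{Y_i}(y)\leq m_X^{-1}$. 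Thus each value of $f$ we touch is bounded by $K_{f,m_X}$ and each difference of two such values by $L_{f,m_X}|\cdot|$ — this is the only use of local Lipschitzness. (ii) The $\ell_1$ distance contracts under channels, so $\|P_{Y_1}-P_{Y_2}\|\leq\|P_{X_1}-P_{X_2}\|\leq\|P_{S_1,X_1}-P_{S_2,X_2}\|$, and similarly (via the channel $(s,x)\mapsto(s,y)$) $\|P_{S_1,Y_1}-P_{S_2,Y_2}\|\leq\|P_{S_1,X_1}-P_{S_2,X_2}\|$.

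For the case $\delta\leq m_S$, I would expand $P_{S_1}(s)P_{Y_1}(y)f(r_1)-P_{S_2}(s)P_{Y_2}(y)f(r_2)$ as $(P_{S_1}(s)-P_{S_2}(s))P_{Y_1}(y)f(r_1) + P_{S_2}(s)(P_{Y_1}(y)-P_{Y_2}(y))f(r_1) + P_{S_2}(s)P_{Y_2}(y)(f(r_1)-f(r_2))$. Summed over $(s,y)$, the first two terms are bounded using $|f(r_1)|\leq K_{f,m_X}$ and the contraction estimates by $2K_{f,m_X}\|P_{S_1,X_1}-P_{S_2,X_2}\|$. For the last term I would use $|f(r_1)-f(r_2)|\leq L_{f,m_X}|r_1-r_2|$ and then telescope the ratio difference once more, writing $P_{S_2}(s)P_{Y_2}(y)|r_1-r_2| = \big|\tfrac{P_{S_2}(s)P_{Y_2}(y)}{P_{S_1}(s)P_{Y_1}(y)}P_{S_1,Y_1}(s,y)-P_{S_2,Y_2}(s,y)\big|$ and splitting off first $P_{S_1,Y_1}(s,y)-P_{S_2,Y_2}(s,y)$ and then the deviation of $\tfrac{P_{S_2}(s)P_{Y_2}(y)}{P_{S_1}(s)P_{Y_1}(y)}$ from $1$. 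The essential point is that a careless estimate would bring in $1/\min_y P_{Y_i}(y)$, which is neither available nor present in the stated constants; this is avoided by recognizing $P_{S_i,Y_i}(s,y)/P_{Y_i}(y)=P_{S_i|Y_i}(s|y)$ and summing $s$ out, leaving only the harmless factors $1/P_{S_i}(s)\leq m_S^{-1}$. Assembling the pieces gives $\Delta_L\leq C_{f,m_S}\|P_{S_1,X_1}-P_{S_2,X_2}\|$.

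For the case $m_S<\delta$, I would isolate the rare symbols: set $\mathcal{S}_\delta\defined\{s:\min(P_{S_1}(s),P_{S_2}(s))\geq\delta\}$ and split $I_f(P_{S_i,Y_i})$ into its restriction to $\mathcal{S}_\delta$ and to $\mathcal{S}\setminus\mathcal{S}_\delta$. On $\mathcal{S}_\delta$ one additionally has $r_i(s,y)\leq 1/P_{S_i}(s)\leq\delta^{-1}$, so the telescoping above, run over $\mathcal{S}_\delta$, produces $\big(2K_{f,\delta}+(2\delta^{-1}+1)L_{f,\delta}\big)\|P_{S_1,X_1}-P_{S_2,X_2}\|$ plus the renormalization slack coming from $\sum_{s\in\mathcal{S}_\delta}P_{S_i}(s)\neq 1$, of order $K_{f,m_X}(|\mathcal{S}|\delta+\|P_{S_1,X_1}-P_{S_2,X_2}\|)$. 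On $\mathcal{S}\setminus\mathcal{S}_\delta$ I would bound each $f$-value by $K_{f,m_X}$ and use that each such $s$ satisfies $P_{S_i}(s)<\delta+|P_{S_1}(s)-P_{S_2}(s)|$, whence $\sum_{s\notin\mathcal{S}_\delta}P_{S_i}(s)<|\mathcal{S}|\delta+\|P_{S_1}-P_{S_2}\|\leq|\mathcal{S}|\delta+\|P_{S_1,X_1}-P_{S_2,X_2}\|$. Collecting the $\delta$-terms and the $\|P_{S_1,X_1}-P_{S_2,X_2}\|$-terms, and tracking the exact constants, yields $\Delta_L\leq A_f|\mathcal{S}|\delta+B_{f,\delta}\|P_{S_1,X_1}-P_{S_2,X_2}\|$. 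The bounds on $\Delta_U$ follow verbatim upon replacing $(\mathcal{S},m_S)$ by $(\mathcal{X},m_X)$, the relevant ratio now being $P_{Y|X}(y|x)/P_{Y_i}(y)$.

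The main obstacle I anticipate is precisely the ratio-perturbation term: the naive bound on $|r_1-r_2|$ produces a factor $1/\min_y P_{Y_i}(y)$, so the work is in organizing the telescoping — routing each stray $1/P_{Y_i}(y)$ through a conditional distribution that sums to one over the other coordinate, and controlling the renormalization slack on $\mathcal{S}_\delta$ — so that only $K_{f,m_X},K_{f,\delta},L_{f,m_X},L_{f,\delta},m_S^{-1},\delta^{-1}$ survive, and with exactly the coefficients bundled into $A_f$, $B_{f,\delta}$ and $C_{f,u}$.
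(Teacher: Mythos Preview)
Your approach is essentially identical to the paper's: establish (as the paper does in a separate auxiliary lemma) that every ratio argument lies in $[0,m_X^{-1}]$ via the Markov structure, split off the symbols with small $S$-marginal, telescope on the remainder exactly as you describe, and route each stray $1/P_{Y_i}(y)$ through $P_{S_i,Y_i}(s,y)\leq P_{Y_i}(y)$ so that only $1/P_{S_i}(s)$ survives. Two minor bookkeeping points: (i) to land exactly the stated constants the paper handles the bad set by treating $\mathcal{S}_1=\{s:P_{S_1}(s)<\delta\}$ and $\mathcal{S}_2=\{s:P_{S_2}(s)<\delta\}$ \emph{separately} and uses $|P_{S_2}(\mathcal{S}_j)-P_{S_1}(\mathcal{S}_j)|\leq\tfrac12\|P_{S_1}-P_{S_2}\|$, which is what produces $A_f=4K_{f,m_X}$ together with the single $K_{f,m_X}$ inside $B_{f,\delta}$ (your union-based estimate $P_{S_i}(s)<\delta+|P_{S_1}(s)-P_{S_2}(s)|$ gives $2K_{f,m_X}|\mathcal{S}|\delta+2K_{f,m_X}\|\cdot\|$ instead, so the same shape but shifted constants); and (ii) there is no ``renormalization slack'' in the argument --- the telescoping over $\mathcal{S}_+$ goes through directly with no additional correction term, so you can drop that piece.
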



Observe that the previous lemma implicitly assumes that $f(0)=\lim_{x\to0^+} f(x)$ is finite. Examples of $f$-divergences satisfying the assumptions of Lemma~\ref{lem:rob_f_inf1} include the total variation distance, the $\chi^2$-distance, and the Hellinger distance of order $\alpha>1$ (a one-to-one transformation of the R\'{e}nyi divergence of the same order). See \cite{sason2016f} for further examples. Note that, however, KL-divergence cannot be handled by Lemma~\ref{lem:rob_f_inf1} as $|\log(x)| \to \infty$ as $x\to0^+$. Indeed, KL-divergence has a different asymptotic behavior than the one obtained in Theorem~\ref{thm:main_robust} below, see \cite{shamir2010learning}.

Due to limited sample size, not all outcomes of $X$ may be observable in the data set used to design the privacy mechanism, and can significantly impact performance depending on the metric used. Indeed, by taking $P_{S_1,X_1}=P_{S,X}$ and $P_{S_2,X_2} = \hat{P}_{S,X}$, we can see  that the upper bounds for $\Delta_L$ and $\Delta_U$ in Lemma~\ref{lem:rob_f_inf1} become larger as $m_X$ gets smaller. In order to address this issue, we propose  a pre-processing technique which combines the symbols with {\it less} observations. Specifically, for $\gamma\geq0$ and $x_0$ a symbol not belonging to $\calX$, we introduce the pre-processing technique $\Pi_\gamma$ with input alphabet $\calX$ and output alphabet
\begin{equation}
    \calX_\gamma \defined \{x\in\calX : \hat{P}_{X}(x) \geq \gamma\} \cup \{x_0\},
\end{equation}
determined by
\begin{equation}
\label{eq:PreProcessing}
\Pi_\gamma(x) = \begin{cases}x & \hat{P}_{X}(x) \geq \gamma, \\ x_0 & \text{otherwise}. \end{cases}
\end{equation}
Consider the following lemma regarding this pre-processing technique.

\begin{lem}
\label{lem:cut_lesslikely_symbols}
Let $\gamma\geq0$. If $X \to X_0 \to Y_0$ is a Markov chain with $X_0 = \Pi_\gamma(X)$. Then, for every $f$-information,
\begin{equation}
\label{equa:utility_lesslikely}
I_f(P_{X,Y_0})=I_f(P_{X_0,Y_0}).
\end{equation}
\end{lem}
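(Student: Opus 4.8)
The plan is to show that $I_f(P_{X,Y_0})$ and $I_f(P_{X_0,Y_0})$ are \emph{literally the same sum} once the terms of the former are regrouped according to the value of the deterministic map $\Pi_\gamma$. Two elementary consequences of the hypotheses drive the argument. First, since $X_0=\Pi_\gamma(X)$ is a deterministic (many-to-one) function, its marginal is obtained by merging: $P_{X_0}(x_0')=\sum_{x\in\Pi_\gamma^{-1}(x_0')}P_X(x)$ for each $x_0'\in\calX_\gamma$. Second, the Markov chain $X\to X_0\to Y_0$ gives $P_{Y_0|X}(y|x)=P_{Y_0|X_0}(y|\Pi_\gamma(x))$ for all $x,y$; in particular, computing the marginal of $Y_0$ through the long chain or through the short chain yields the \emph{same} distribution $P_{Y_0}$, by the law of total probability.

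The main step is to observe that the likelihood ratio appearing inside $I_f(P_{X,Y_0})$ factors through $\Pi_\gamma$:
\begin{align*}
\frac{P_{X,Y_0}(x,y)}{P_X(x)\,P_{Y_0}(y)}
&= \frac{P_{Y_0|X}(y|x)}{P_{Y_0}(y)}
= \frac{P_{Y_0|X_0}(y|\Pi_\gamma(x))}{P_{Y_0}(y)} \\
&= \frac{P_{X_0,Y_0}(\Pi_\gamma(x),y)}{P_{X_0}(\Pi_\gamma(x))\,P_{Y_0}(y)},
\end{align*}
where the middle equality is the Markov property and the outer ones are the definition of conditional probability. Substituting this into $I_f(P_{X,Y_0})=\sum_{x,y}P_X(x)P_{Y_0}(y)\,f\!\big(\tfrac{P_{X,Y_0}(x,y)}{P_X(x)P_{Y_0}(y)}\big)$, I would split the sum over $x$ by the value $x_0'=\Pi_\gamma(x)$, pull out the $f(\cdot)$ factor (which now depends on $x$ only through $x_0'$), and use $\sum_{x\in\Pi_\gamma^{-1}(x_0')}P_X(x)=P_{X_0}(x_0')$ together with the consistency of $P_{Y_0}$. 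What remains is exactly $\sum_{x_0',y}P_{X_0}(x_0')P_{Y_0}(y)\,f\!\big(\tfrac{P_{X_0,Y_0}(x_0',y)}{P_{X_0}(x_0')P_{Y_0}(y)}\big)=I_f(P_{X_0,Y_0})$.

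An alternative, slicker route is a data-processing sandwich: since $X_0=\Pi_\gamma(X)$ is a function of $X$, the chain $Y_0\to X\to X_0$ is Markov, so the data processing inequality for $f$-divergences gives $I_f(P_{X_0,Y_0})\le I_f(P_{X,Y_0})$; applying the same inequality to the given chain $X\to X_0\to Y_0$ gives the reverse, hence equality. I do not expect a genuine obstacle here — the lemma is essentially bookkeeping. The only point that warrants a line of care is checking that $P_{Y_0}$ is unambiguous, i.e. that marginalizing $Y_0$ through $X\to Y_0$ and through $X_0\to Y_0$ produces the same distribution, which is immediate from the Markov property; degenerate terms (symbols with $P_X(x)=0$) are absorbed by the convention $0\cdot f(0/0)=0$ and do not affect the computation.
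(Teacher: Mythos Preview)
Your primary argument---that the likelihood ratio $P_{X,Y_0}(x,y)/(P_X(x)P_{Y_0}(y))$ factors through $\Pi_\gamma$ and hence the sum regroups exactly into $I_f(P_{X_0,Y_0})$---is correct and is precisely the paper's proof, just stated more abstractly: where you write $\sum_{x\in\Pi_\gamma^{-1}(x_0')}P_X(x)=P_{X_0}(x_0')$, the paper splits explicitly into $x\in\calX_1$ (where $\Pi_\gamma$ is the identity) and $x\in\calX_1^c$ (where $\Pi_\gamma\equiv x_0$) and carries out the same cancellation.

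Your alternative DPI sandwich is also valid and is genuinely different from the paper's route. The Markov chain $X\to X_0\to Y_0$ (equivalently $Y_0\to X_0\to X$) gives $I_f(P_{X,Y_0})\le I_f(P_{X_0,Y_0})$, while the determinism of $\Pi_\gamma$ makes $Y_0\to X\to X_0$ Markov and gives the reverse inequality. This is shorter and avoids any term-by-term computation; the paper's direct expansion, on the other hand, makes transparent the remark following the lemma (that $P_{Y_0|X}$ and $P_{Y_0|X_0}$ coincide on each fiber), which is the real content driving both proofs.
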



Although this lemma may look counterintuitive at a first glance, its proof relies on the fact that the conditional distributions $P_{Y_0|X}$ and $P_{Y_0|X_0}$ are essentially the same. Specifically,
\begin{equation*}
\begin{aligned}
    &P_{Y_0|X}(y|x) \\
    &= \begin{cases} P_{Y_0|X_0}(y|x) & x\in\{x\in\calX : \hat{P}_{X}(x) \geq \gamma\},\\
    P_{Y_0|X_0}(y|x_0) & x\in\calX\setminus\{x\in\calX : \hat{P}_{X}(x) \geq \gamma\}. \end{cases}
\end{aligned}
\end{equation*}

The following theorem is the main result of this section. It bounds the discrepancy of the privacy-utility guarantees between the learned and true distributions.

\begin{thm}
\label{thm:main_robust}
Let $\gamma\geq0$ and $\hat{P}_{S,X}$ be the empirical distribution of $n$ i.i.d. samples drawn from $P_{S,X}$. Assume that
\begin{equation}
    S\to X\to X_0 \to Y_0,
\end{equation}
where $X_0 = \Pi_\gamma(X)$ and $P_{Y_0|X_0}$ is fixed. Let $P_{S,Y_0}$ and $\hat{P}_{S,Y_0}$ be the joint distributions of $(S,Y_0)$ when the joint distributions of $(S,X)$ are $P_{S,X}$ and $\hat{P}_{S,X}$, respectively. Define $P_{X,Y_0}$ and $\hat{P}_{X,Y_0}$ in an equivalent manner. Let
\begin{align*}
m_S &\defined \min\{\{P_{S}(s) : s\in\mathcal{S}\}\cup\{\hat{P}_{S}(s) : s\in\mathcal{S}\}\},\\
m_X &\defined \min\{\{P_{X_0}(x) : x\in\mathcal{X}_\gamma\}\cup\{\hat{P}_{X_0}(x) : x\in\mathcal{X}_\gamma\}\}.
\end{align*}
If $f:[0,\infty)\to\Reals$ is locally Lipschitz and $m_X \leq m_S$, then, with probability $1-\beta_\lambda$,
\begin{align}
    |I_f(\hat{P}_{S,Y_0})-I_f(P_{S,Y_0})| &\leq C_{f,m_S} \lambda \sqrt{\frac{20M}{n}},\\
    |I_f(\hat{P}_{X_0,Y_0})-I_f(P_{X,Y_0})|
    &\leq  C_{f,m_X} \lambda \sqrt{\frac{20M}{n}},
\end{align}
where $M = |\mathcal{S}||\mathcal{X}|$, $\beta_\lambda = 3\exp(-4\lambda^2M/5)$ with $\lambda \geq 1$ and $C_{f,u}$ is defined in Lemma~\ref{lem:rob_f_inf1}.
\end{thm}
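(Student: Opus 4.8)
The plan is to assemble the theorem by combining the two preceding lemmas with the contraction of the $\ell_1$-distance under channels and Devroye's concentration estimate~\eqref{equa:L1bound_Devroye}. The entire argument is carried out conditionally on the realized samples, so that $\Pi_\gamma$ and the alphabet $\calX_\gamma$ are fixed deterministic objects; the sample randomness re-enters only at the final step.

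First I would record the structural reductions. Since $X_0=\Pi_\gamma(X)$ is a deterministic function of $X$, the hypothesised chain $S\to X\to X_0\to Y_0$ yields the shorter chain $S\to X_0\to Y_0$ under \emph{both} $P_{S,X}$ and $\hat{P}_{S,X}$, with the \emph{same} kernel $P_{Y_0|X_0}$ (fixed by assumption, independent of the input law). Hence the two triples $(S,X_0,Y_0)$ --- one built from $P_{S,X}$, one from $\hat{P}_{S,X}$ --- satisfy the hypotheses of Lemma~\ref{lem:rob_f_inf1} with the identification $(S_1,X_1,Y_1)\leftrightarrow(S,X_0,Y_0)$ (true law) and $(S_2,X_2,Y_2)\leftrightarrow(S,X_0,Y_0)$ (empirical law); moreover the minima $m_S,m_X$ that the lemma associates with this instantiation coincide with those in the statement of the theorem, the $X$-minimum being over $\calX_\gamma$. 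Separately, Lemma~\ref{lem:cut_lesslikely_symbols} applied to $P_{S,X}$ gives $I_f(P_{X,Y_0})=I_f(P_{X_0,Y_0})$, which is exactly what converts the mixed quantity $|I_f(\hat{P}_{X_0,Y_0})-I_f(P_{X,Y_0})|$ into $|I_f(\hat{P}_{X_0,Y_0})-I_f(P_{X_0,Y_0})|$.

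Next I would invoke Lemma~\ref{lem:rob_f_inf1} with $\delta=m_X$: since the theorem assumes $m_X\le m_S$, this $\delta$ satisfies both $\delta\le m_X$ and $\delta\le m_S$, so the ``$\delta\le m$'' branch of each case distinction is selected and yields
\begin{align*}
|I_f(P_{S,Y_0})-I_f(\hat{P}_{S,Y_0})| &\le C_{f,m_S}\,\|P_{S,X_0}-\hat{P}_{S,X_0}\|,\\
|I_f(P_{X_0,Y_0})-I_f(\hat{P}_{X_0,Y_0})| &\le C_{f,m_X}\,\|P_{S,X_0}-\hat{P}_{S,X_0}\|.
\end{align*}
Because $(s,x)\mapsto(s,\Pi_\gamma(x))$ is a deterministic map, pushing $P_{S,X}$ and $\hat{P}_{S,X}$ through it cannot increase the $\ell_1$-distance, so $\|P_{S,X_0}-\hat{P}_{S,X_0}\|\le\|P_{S,X}-\hat{P}_{S,X}\|$. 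Re-introducing the sample randomness, \eqref{equa:L1bound_Devroye} gives $\|P_{S,X}-\hat{P}_{S,X}\|\le\lambda\sqrt{20M/n}$ with probability at least $1-\beta_\lambda$; chaining these inequalities with $I_f(P_{X,Y_0})=I_f(P_{X_0,Y_0})$ yields the two displayed bounds.

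I do not expect a deep obstacle, since the real content sits in the two lemmas; the step needing the most care is the verification that Lemma~\ref{lem:rob_f_inf1} genuinely applies after post-processing --- namely that $\Pi_\gamma$, although \emph{defined} through $\hat{P}_X$, still induces a legitimate Markov chain $S\to X_0\to Y_0$ under the \emph{true} distribution with the \emph{same} kernel, and that the minima $m_S,m_X$ are taken over exactly $\mathcal{S}$ and $\calX_\gamma$. A secondary bookkeeping point is the order of quantifiers: Lemma~\ref{lem:rob_f_inf1} holds deterministically for each fixed sample realization, and only the scalar $\|P_{S,X}-\hat{P}_{S,X}\|$ is controlled probabilistically through Devroye's lemma.
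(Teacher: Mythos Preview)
Your proposal is correct and follows essentially the same route as the paper's own proof: apply Lemma~\ref{lem:rob_f_inf1} with $\delta=m_X$ to the post-processed pairs $P_{S,X_0}$ and $\hat{P}_{S,X_0}$, contract the $\ell_1$-distance through the deterministic map $(s,x)\mapsto(s,\Pi_\gamma(x))$, invoke Devroye's bound~\eqref{equa:L1bound_Devroye}, and then use Lemma~\ref{lem:cut_lesslikely_symbols} to replace $I_f(P_{X_0,Y_0})$ by $I_f(P_{X,Y_0})$. The additional care you take in noting that $\Pi_\gamma$ is sample-dependent yet still yields a valid Markov chain under the true law, and in keeping the probabilistic and deterministic steps separated, is a useful clarification that the paper leaves implicit.
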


\begin{proof}[\bf Proof of Theorem~\ref{thm:main_robust}]
We first apply Lemma~\ref{lem:rob_f_inf1} with $\delta=m_X$, $P_{S_1,X_1}=P_{S,X_0}$, and $P_{S_2,X_2}=\hat{P}_{S,X_0}$. In particular, we obtain that
\begin{align}
    |I_f(\hat{P}_{S,Y_0})-I_f(P_{S,Y_0})| &\leq C_{f,m_S} \|\hat{P}_{S,X_0}-P_{S,X_0}\|,\\
    |I_f(\hat{P}_{X_0,Y_0})-I_f(P_{X_0,Y_0})|
    &\leq  C_{f,m_X} \|\hat{P}_{S,X_0}-P_{S,X_0}\|.
\end{align}
By the data processing inequality, we have
\begin{equation}
\|\hat{P}_{S,X_0}-P_{S,X_0}\|\leq \|\hat{P}_{S,X}-P_{S,X}\|.
\end{equation}
By the inequality~\eqref{equa:L1bound_Devroye}, with probability at least $1-\beta_\lambda$,
\begin{equation}
\|\hat{P}_{S,X}-P_{S,X}\| \leq \lambda \sqrt{\frac{20M}{n}},
\end{equation}
where $\beta_\lambda = 3\exp(-4\lambda^2M/5)$ and $\lambda \geq 1$.
Hence,
\begin{align}
    |I_f(\hat{P}_{S,Y_0})-I_f(P_{S,Y_0})| &\leq C_{f,m_S} \lambda \sqrt{\frac{20M}{n}},\\
    |I_f(\hat{P}_{X_0,Y_0})-I_f(P_{X_0,Y_0})|
    &\leq  C_{f,m_X} \lambda \sqrt{\frac{20M}{n}}.
\end{align}
By Lemma~\ref{lem:cut_lesslikely_symbols}, we have that
\begin{equation}
I_f(P_{X,Y_0})=I_f(P_{X_0,Y_0}).
\end{equation}
The result follows.
\end{proof}
\subsection{Uniform Privacy Guarantees}
\label{Section:UniformRobustness}

Let $\mathbb{P}$ denote the set of all probability distributions over $\mathcal{S}\times\mathcal{X}$. Assume that prior information about the joint distribution of $S$ and $X$ is available. In this case, we let $\mathbb{Q}\subseteq\mathbb{P}$ be all the joint distributions compatible with the prior knowledge. For a given $\hat{P}_{S,X}\in\mathbb{P}$ and $r\geq0$, we define
\begin{equation}
    \mathbb{Q}_r(\hat{P}_{S,X}) \defined \{Q_{S,X}\in\mathbb{Q} : \|Q_{S,X}-\hat{P}_{S,X}\| \leq r\}.
\end{equation}
In this setting, a natural modification for the uniform privacy mechanism $P_{Y|X}^*$ is the following. Given $\hat{P}_{S,X}\in\mathbb{Q}$, $\epsilon\geq0$, and $r\geq0$, let
\begin{equation}
\label{eq:DefParametricRPM}
    P_{Y|X}^{*}(\hat{P}_{S,X};\epsilon,r) \defined \operatorname*{arg\,max}_{P_{Y|X}\in \mathcal{D}_{\mathbb{Q}}(\hat{P}_{S,X};\epsilon,r)} \calU_r(\hat{P}_{S,X},P_{Y|X})
\end{equation}
where
\begin{equation*}
\mathcal{D}_{\mathbb{Q}}(\hat{P}_{S,X};\epsilon,r) \defined \hspace{-5pt} \bigcap_{Q_{S,X} \in \mathbb{Q}_r(\hat{P}_{S,X})} \hspace{-5pt} \left\{ P_{Y|X} : \mathcal{L}(Q_{S,X},P_{Y|X}) \leq \epsilon\right\},
\end{equation*}
\begin{equation*}
\calU_r(\hat{P}_{S,X},P_{Y|X}) \defined \hspace{-5pt} \inf_{Q_{S,X}\in \mathbb{Q}_r(\hat{P}_{S,X})} \hspace{-5pt} \mathcal{U}(Q_{S,X},P_{Y|X}).
\end{equation*}
Finally, recall that
\begin{equation}
\Delta_{\mathbb{Q}}(P_{S,X},\hat{P}_{S,X};\epsilon,r) \defined \mathsf{H}(P_{S,X};\epsilon) - \calU(P_{S,X},P_{Y|X}^*).
\end{equation}

In order to simplify the notation, in what follows we denote $\hat{P}_{S,X}$, $P_{S,X}$, and $Q_{S,X}$ by $\hat{P}$, $P$ and $Q$, respectively. For the privacy measures under consideration, $f$-information and probability of correct guessing, it has been proved that the optimal privacy mechanism for the PUT in Definition~\ref{defn:PUT} requires an alphabet of size $|\mathcal{X}|+1$, see \cite{witsenhausen1975conditional,asoodeh2017estimation} and references therein. With this in mind, let $\mathbb{F}$ be the set of all row stochastic matrices of dimension $|\mathcal{X}|\times(|\mathcal{X}|+1)$. Note that the set $\mathbb{F}$ models all privacy mechanisms $P_{Y|X}$ with $|\mathcal{Y}| \leq |\mathcal{X}|+1$. In this case, the privacy-utility function in Definition~\ref{defn:PUT} equals
\begin{equation}
\label{eq:ReducedAlphabetH}
    \mathsf{H}(P;\epsilon) = \sup_{\begin{smallmatrix} F\in\mathbb{F} \\ \mathcal{L}(P,F)\leq\epsilon \end{smallmatrix}} \mathcal{U}(P,F),
\end{equation}
for all $P\in\mathbb{P}$. Note that, in principle, the robust privacy mechanism in Definition~\ref{def:robust-utility-privacy}, or the version in \eqref{eq:DefParametricRPM}, may require the use of more than $|\mathcal{X}|+1$ output symbols. However, the following lower bound for $\calU(P,P_{Y|X}^*)$, which can be computed using mechanisms with $|\mathcal{X}|+1$ output symbols, will be enough for our purposes,
\begin{equation}
\label{eq:LowerBoundR}
    \calU(P,P_{Y|X}^*) \geq \sup_{F\in\mathbb{D}_{\mathbb{Q}}(\hat{P};\epsilon,r)} \inf_{Q\in\mathbb{\mathbb{Q}}_r(\hat{P})} \mathcal{U}(Q,F),
\end{equation}
where $\hat{P}\in\mathbb{Q}$ and
\begin{equation}
\mathbb{D}_{\mathbb{Q}}(\hat{P};\epsilon,r) \defined \bigcap_{Q\in\mathbb{Q}_r(\hat{P})} \{F \in \mathbb{F} : \mathcal{L}(Q,F) \leq \epsilon\} \subseteq \mathbb{F}.
\end{equation}
The main result of this section provides an upper bound for $\Delta_{\mathbb{Q}}$ whenever the leakage and utility functions satisfy a H\"{o}lder-like condition. Recall that a function $f:\mathbb{R}\to\mathbb{R}$ is said to be H\"{o}lder continuous of order $\alpha\in[0,1]$ if there exists $K\geq0$ such that $|f(x)-f(y)| \leq K |x-y|^\alpha$ for all $x,y\in\mathbb{R}$.

\begin{thm}
\label{Thm:Delta_gap}
    Assume that $\mathbb{Q}\subseteq\mathbb{P}$ is a closed set and that for every $Q\in\mathbb{Q}$ the functions
    \begin{equation}
    \label{eq:AssumptionsThm2a}
    F \mapsto \mathcal{L}(Q,F) \quad \text{and} \quad F \mapsto \mathcal{U}(Q,F),
    \end{equation}
    are continuous and that \eqref{eq:ReducedAlphabetH} holds true. Furthermore, assume that for a given $\hat{P}\in\mathbb{Q}$ there exist positive constants $r_0$, $\alpha$, $C_L$, and $C_U$ such that
    \begin{equation}
    \label{eq:AssumptionsThm2b}
        |\mathcal{L}(\hat{P},F)-\mathcal{L}(Q,F)| \leq C_L \|\hat{P}-Q\|^\alpha,
    \end{equation}
    \begin{equation}
    \label{eq:AssumptionsThm2c}
        |\mathcal{U}(\hat{P},F)-\mathcal{U}(Q,F)| \leq C_U \|\hat{P}-Q\|^\alpha,
    \end{equation}
    for all $Q\in\mathbb{Q}_{r_0}(\hat{P})$ and all $F\in\mathbb{F}$. If $P\in\mathbb{Q}_{r_0}(\hat{P})$, then, for all $\epsilon>0$ and all $r^\alpha \leq \min\{r_0^\alpha,(\epsilon- \min_F \mathcal{L}(\hat{P},F))/C_L\}$,
    \begin{equation}
        \Delta_{\mathbb{Q}}(P,\hat{P};\epsilon,r)  \leq \mathsf{H}(\hat{P};\epsilon+C_Lr^\alpha)- \mathsf{H}(\hat{P};\epsilon-C_Lr^\alpha) + 2C_Ur^\alpha.
    \end{equation}
\end{thm}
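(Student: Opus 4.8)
The plan is to bound $\Delta_{\mathbb{Q}}(P,\hat{P};\epsilon,r) = \mathsf{H}(P;\epsilon) - \calU(P,P_{Y|X}^*)$ by separately controlling the two terms in terms of quantities evaluated at $\hat{P}$, using the H\"{o}lder assumptions \eqref{eq:AssumptionsThm2b}--\eqref{eq:AssumptionsThm2c} as the bridge between $P$ (or an arbitrary $Q$) and $\hat{P}$. First I would handle $\mathsf{H}(P;\epsilon)$: since $P\in\mathbb{Q}_{r_0}(\hat{P})$ and $\mathbb{Q}$ is closed with \eqref{eq:ReducedAlphabetH} valid, every $F\in\mathbb{F}$ with $\mathcal{L}(P,F)\le\epsilon$ satisfies $\mathcal{L}(\hat{P},F)\le\mathcal{L}(P,F)+C_L\|\hat{P}-P\|^\alpha\le\epsilon+C_Lr^\alpha$, so the feasible set for $\mathsf{H}(P;\epsilon)$ is contained in that for $\mathsf{H}(\hat{P};\epsilon+C_Lr^\alpha)$ up to the utility perturbation; concretely $\mathcal{U}(P,F)\le\mathcal{U}(\hat{P},F)+C_Ur^\alpha$, and taking the supremum gives $\mathsf{H}(P;\epsilon)\le \mathsf{H}(\hat{P};\epsilon+C_Lr^\alpha)+C_Ur^\alpha$.

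The second step is the lower bound on $\calU(P,P_{Y|X}^*)$. Using \eqref{eq:LowerBoundR}, it suffices to exhibit a single $F^\star\in\mathbb{D}_{\mathbb{Q}}(\hat{P};\epsilon,r)$ with good worst-case utility. The natural choice is an (approximate) maximizer for $\mathsf{H}(\hat{P};\epsilon-C_Lr^\alpha)$, i.e. some $F^\star\in\mathbb{F}$ with $\mathcal{L}(\hat{P},F^\star)\le\epsilon-C_Lr^\alpha$ and $\mathcal{U}(\hat{P},F^\star)$ arbitrarily close to $\mathsf{H}(\hat{P};\epsilon-C_Lr^\alpha)$; the hypothesis $r^\alpha\le(\epsilon-\min_F\mathcal{L}(\hat{P},F))/C_L$ guarantees this feasible set is nonempty. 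For any $Q\in\mathbb{Q}_r(\hat{P})$ one has $\mathcal{L}(Q,F^\star)\le\mathcal{L}(\hat{P},F^\star)+C_L\|\hat{P}-Q\|^\alpha\le\epsilon$, so $F^\star\in\mathbb{D}_{\mathbb{Q}}(\hat{P};\epsilon,r)$; and $\mathcal{U}(Q,F^\star)\ge\mathcal{U}(\hat{P},F^\star)-C_Ur^\alpha$, so $\inf_{Q\in\mathbb{Q}_r(\hat{P})}\mathcal{U}(Q,F^\star)\ge\mathsf{H}(\hat{P};\epsilon-C_Lr^\alpha)-C_Ur^\alpha$ after letting $F^\star$ approach optimality. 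Hence $\calU(P,P_{Y|X}^*)\ge\mathsf{H}(\hat{P};\epsilon-C_Lr^\alpha)-C_Ur^\alpha$. Subtracting the two displays yields exactly the claimed inequality, with the $2C_Ur^\alpha$ arising as the sum of the two $C_Ur^\alpha$ utility slacks.

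A couple of technical points need care. One should check that $\mathbb{D}_{\mathbb{Q}}(\hat{P};\epsilon,r)$ is precisely the right feasible set to invoke \eqref{eq:LowerBoundR}: the definition intersects over $Q\in\mathbb{Q}_r(\hat{P})$, and the continuity of $F\mapsto\mathcal{L}(Q,F)$ on the compact set $\mathbb{F}$ ensures these sets are closed so the argmax/sup is attained (or at least approached), which is why passing to an approximate maximizer and then a limit is legitimate. The one place I expect friction is making the supremum/infimum exchange rigorous in \eqref{eq:LowerBoundR}: we are lower-bounding a $\sup_F\inf_Q$ by plugging in one feasible $F^\star$, which is trivially valid, but one must be careful that $F^\star$ chosen to be near-optimal for $\mathsf{H}(\hat{P};\epsilon-C_Lr^\alpha)$ genuinely lies in $\mathbb{D}_{\mathbb{Q}}(\hat{P};\epsilon,r)$ uniformly over all $Q$ — that is where the slack $C_Lr^\alpha$ in the privacy budget is consumed, and the constraint $r^\alpha\le r_0^\alpha$ is what lets us apply the H\"{o}lder bounds, which are only assumed on $\mathbb{Q}_{r_0}(\hat{P})$. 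Once these feasibility verifications are in place, the rest is bookkeeping.
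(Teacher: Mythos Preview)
Your proposal is correct and follows essentially the same two-step strategy as the paper: upper-bound $\mathsf{H}(P;\epsilon)$ by $\mathsf{H}(\hat{P};\epsilon+C_Lr^\alpha)+C_Ur^\alpha$ via the H\"{o}lder slacks, then lower-bound $\calU(P,P_{Y|X}^*)$ by $\mathsf{H}(\hat{P};\epsilon-C_Lr^\alpha)-C_Ur^\alpha$ using a (near-)optimizer $F^\star$ of the tightened problem together with \eqref{eq:LowerBoundR}. The only cosmetic difference is that the paper invokes continuity of $F\mapsto\mathcal{L}(\cdot,F),\mathcal{U}(\cdot,F)$ on the compact set $\mathbb{F}$ to assert that exact maximizers exist, whereas you work with approximate maximizers and pass to the limit; both are fine.
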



\begin{rem}
Under the assumptions of Theorem~\ref{Thm:Delta_gap}, if $\mathsf{H}(\hat{P};\cdot)$ is Lipschitz continuous with Lipschitz constant $L$ then
\begin{equation}
\label{eq:LipchitzDeltaQ}
    \Delta_{\mathbb{Q}}(P,\hat{P};\epsilon,r) \leq 2(C_U+LC_L)r^\alpha.
\end{equation}
\end{rem}

The assumptions in \eqref{eq:AssumptionsThm2a}, \eqref{eq:AssumptionsThm2b} and \eqref{eq:AssumptionsThm2c} might seem restrictive at a first glance. Nonetheless, as shown in the following, they hold true for our measures of interest: $f$-informations and probability of correct guessing.

\subsubsection{$f$-Divergences}

Assume that both privacy and utility are measured by an $f$-information for a given convex function $f:(0,\infty)\to\mathbb{R}$ with $f(1)=0$, i.e.,
\begin{align}
    \mathcal{L}(Q_{S,X},P_{Y|X}) &\defined I_f(Q_{S,Y}),\\
    \mathcal{U}(Q_{S,X},P_{Y|X}) &\defined I_f(Q_{X,Y}).
\end{align}
A standard convexity argument, see, e.g., \cite{witsenhausen1975conditional}, shows that
\begin{equation}
    \mathsf{H}(Q_{S,X};\epsilon) \defined \sup_{\begin{smallmatrix} S \rightarrow X \rightarrow Y \\ I_f(Q_{S,Y})\leq\epsilon \end{smallmatrix}} I_f(Q_{X,Y})
\end{equation}
admits the expression in \eqref{eq:ReducedAlphabetH}, i.e., it is enough to consider privacy mechanisms taking values on $\mathcal{Y} = \{1,\ldots,|\mathcal{X}|+1\}$.

\begin{example}
\label{Example:RobustnessfDivergence1}
Consider the parametric case where
\begin{align}
    \nonumber \mathbb{Q} \defined & \left\{Q\in\mathbb{P} : \sum_{x\in\mathcal{X}} Q(s,x) \geq \gamma\text{ for all }s\in\mathcal{S}\right\}\\
    & \quad \cap \left\{Q\in\mathbb{P} : \sum_{s\in\mathcal{S}} Q(s,x) \geq \gamma\text{ for all }x\in\mathcal{X}\right\}
\end{align}
for some $\gamma>0$. Note that this corresponds to the case in which $S$ and $X$ have full support and their marginal distributions are bounded away from zero. In this case, Lemma~\ref{lem:rob_f_inf1} implies that the assumptions of Theorem~\ref{Thm:Delta_gap} are satisfied with $r_0=\infty$, $\alpha=1$, and
\begin{equation}
    C_L = C_U = 2 K_{f,\gamma} + (2\gamma^{-1}+1)L_{f,\gamma}.
\end{equation}
In particular, if $f(x) = |x-1|$, then
\begin{equation}
    C_L = C_U \leq 4\gamma^{-1}+1;
\end{equation}
and if $f(x)=x^2-1$, then
\begin{equation}
    C_L = C_U \leq 8\gamma^{-2}.
\end{equation}
\end{example}

\subsubsection{Probability of Correct Guessing}
\label{SubsectionRobustPc}

For ease of notation, let $\mathcal{Y} = \{1,\ldots,|\mathcal{X}|+1\}$. In the setting of the PUT, let
\begin{align}
    \mathcal{L}(Q,F) &= \sum_{y\in\mathcal{Y}} \max_{s\in\mathcal{S}} \sum_{x\in\mathcal{X}} Q(s,x) F(x,y),\\
    \mathcal{U}(Q,F) &= \sum_{y\in\mathcal{Y}} \max_{x\in\mathcal{X}} \sum_{s\in\mathcal{S}} Q(s,x) F(x,y).
\end{align}
The above choice corresponds to the case when the measures of privacy and utility are $P_c(S|Y)$ and $P_c(X|Y)$, respectively. In particular, for each $\epsilon\in[P_c(S),P_c(S|X)]$,
\begin{equation}
    \mathsf{H}(P;\epsilon) = \sup_{\begin{smallmatrix} S \rightarrow X \rightarrow Y \\ P_c(S|Y) \leq \epsilon \end{smallmatrix}} P_c(X|Y).
\end{equation}
This privacy-utility trade-off based on the probability of correctly guessing was recently studied by Asoodeh et al \cite{asoodeh2017estimation}. In this case, it is possible to verify that $\mathcal{L}(P,\cdot)$ and $\mathcal{U}(P,\cdot)$ are continuous.

For $\hat{P},Q\in\mathbb{P}$ and $F\in\mathbb{F}$, let $\Delta_L \defined |\mathcal{L}(\hat{P},F) - \mathcal{L}(Q,F)|$. It can be verified that, for $a_i,b_i\geq0$,
\begin{equation}
    |\max_i a_i - \max_i b_i| \leq \max_i |a_i - b_i|,
\end{equation}
and, in particular,
\begin{align}
     \Delta_L &\leq \sum_{y\in\mathcal{Y}} \left|\max_{s\in\mathcal{S}} (\hat{P}F)(s,y) - \max_{s\in\mathcal{S}} (QF)(s,y)\right|\\
     &\leq \sum_{y\in\mathcal{Y}} \max_{s\in\mathcal{S}} \left| (\hat{P}F)(s,y) - (QF)(s,y)\right|.
\end{align}
Note that $(QF)(s,y) = \sum_{x} Q(s,x) F(x,y)$. Thus, a straightforward manipulation shows that
\begin{align}
     \Delta_L &\leq \sum_{y\in\mathcal{Y}} \max_{s\in\mathcal{S}} \sum_{x\in\mathcal{X}} |\hat{P}(s,x)-Q(s,x)| F(x,y)\\
    &\leq \sum_{s\in\mathcal{S}} \sum_{x\in\mathcal{X}} \sum_{y\in\mathcal{Y}} |\hat{P}(s,x) - Q(s,x)| F(x,y)\\
    &\leq \|\hat{P}-Q\|.
\end{align}
Similarly, it can be shown that
\begin{equation}
    |\mathcal{U}(\hat{P},F) - \mathcal{U}(Q,F)| \leq \| \hat{P} - Q \|.
\end{equation}
Hence, the probability of correct guessing satisfies the assumptions of Theorem~\ref{Thm:Delta_gap} with $r_0=\infty$, $\alpha=1$, $C_L=C_U=1$.

\begin{example}
\label{Example:ParametricPc}
For $p,q\in[0,1]$, we let
\begin{equation}
    p\#q = \left(\begin{matrix} (1-p)(1-q) & (1-p)q \\ pq & p(1-q) \end{matrix}\right),
\end{equation}
and $\mathbb{Q} = \{p\#q : p\in[1/2,1],q\in[0,1/2],p+q\leq1\}$. This selection of $\mathbb{Q}$ captures the case when $S$ is assumed to be a Bernoulli random variable with $\Pr(S=1)=p$ and the channel between $S$ and $X$ is a binary symmetric channel with crossover probability $q$. By Theorem~2 in \cite{asoodeh2017estimation}, for all $Q\in\mathbb{Q}$,
\begin{equation}
    \mathsf{H}(Q;\epsilon) = 1 - \frac{1-q}{p-q} (p+q-2pq) + \epsilon \frac{p+q-2pq}{p-q},
\end{equation}
whenever $\epsilon\in[p,1-q]$. Hence, the bound in \eqref{eq:LipchitzDeltaQ} becomes
\begin{equation}
    \Delta_{\mathbb{Q}}(P,\hat{P};\epsilon,r) \leq \frac{2\hat{p}(1-\hat{q})}{\hat{p}-\hat{q}} r,
\end{equation}
where $\hat{P} \defined \hat{p}\#\hat{q} \in \mathbb{Q}$. By \eqref{equa:L1bound_Devroye}, for $\lambda\geq1$, with probability at least $1-\beta_\lambda$,
\begin{equation}
    \Delta_{\mathbb{Q}}(P_{S,X},\hat{P}_{S,X};\epsilon,4\lambda\sqrt{5/n}) \leq \frac{8\hat{p}(1-\hat{q})}{\hat{p}-\hat{q}} \lambda \sqrt{\frac{5}{n}},
\end{equation}
where $\beta_\lambda \defined 3\exp(-16\lambda^2/5)$.
\end{example}

\bibliographystyle{IEEEtran}
\bibliography{references}

\appendices
\section{Proof of Lemma~\ref{lem:rob_f_inf1}}
\label{Appendix:Proofs}

The following auxiliary lemma will be used in the proof of Lemma~\ref{lem:rob_f_inf1}.

\begin{lem}
\label{lem:auxiliary_mx}
Let $S$, $X$ and $Y$ be random variables supported over finite alphabets $\mathcal{S}$, $\mathcal{X}$ and $\mathcal{Y}$, respectively. Assume that $S\rightarrow X \rightarrow Y$ form a Markov chain in that order. Then, for all $s\in\mathcal{S}$, $x\in\mathcal{X}$ and $y\in\mathcal{Y}$,
\begin{equation}
    \max\left\{\frac{P_{S,Y}(s,y)}{P_S(s)P_Y(y)},\frac{P_{X,Y}(x,y)}{P_X(x)P_Y(y)}\right\} \leq \left(\min_{x\in\mathcal{X}} P_X(x)\right)^{-1}.
\end{equation}
\end{lem}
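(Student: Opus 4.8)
\textbf{Proof proposal for Lemma~\ref{lem:auxiliary_mx}.}
The plan is to reduce both ratios to a comparison between a single conditional probability $P_{Y|X}(y\,|\,x')$ and the marginal $P_Y(y)$, and to observe that the marginal automatically dominates $m_X\defined\min_{x\in\calX}P_X(x)$ times any such conditional probability. Write $m_X$ for this minimum; if $m_X=0$ the right-hand side is $+\infty$ and there is nothing to prove, so assume $m_X>0$, i.e. $X$ has full support. Also, for the ratios to make sense fix $s,x,y$ with $P_Y(y)>0$ (the case $P_Y(y)=0$ being vacuous), which forces $P_{S,Y}(s,y)/(P_S(s)P_Y(y))$ and $P_{X,Y}(x,y)/(P_X(x)P_Y(y))$ to be well defined.

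\textbf{Step 1 (the marginal dominates every conditional column).} For any $x'\in\calX$,
\begin{equation*}
P_Y(y)=\sum_{x''\in\calX}P_X(x'')\,P_{Y|X}(y\,|\,x'')\;\geq\;P_X(x')\,P_{Y|X}(y\,|\,x')\;\geq\;m_X\,P_{Y|X}(y\,|\,x').
\end{equation*}
This is the only nontrivial inequality in the argument, and it is elementary: drop all but one nonnegative summand, then bound $P_X(x')$ below by $m_X$.

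\textbf{Step 2 (the $X$-term).} Using $P_{X,Y}(x,y)/(P_X(x)P_Y(y))=P_{Y|X}(y\,|\,x)/P_Y(y)$ and Step~1 with $x'=x$: if $P_{Y|X}(y\,|\,x)=0$ the ratio is $0\leq m_X^{-1}$; otherwise dividing the inequality $P_Y(y)\geq m_X P_{Y|X}(y\,|\,x)$ by $P_Y(y)P_{Y|X}(y\,|\,x)>0$ gives $P_{Y|X}(y\,|\,x)/P_Y(y)\leq m_X^{-1}$.

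\textbf{Step 3 (the $S$-term).} Here the Markov chain enters. By $S\to X\to Y$,
\begin{equation*}
\frac{P_{S,Y}(s,y)}{P_S(s)P_Y(y)}=\frac{P_{Y|S}(y\,|\,s)}{P_Y(y)}=\frac{1}{P_Y(y)}\sum_{x'\in\calX}P_{X|S}(x'\,|\,s)\,P_{Y|X}(y\,|\,x')\;\leq\;\frac{\max_{x'\in\calX}P_{Y|X}(y\,|\,x')}{P_Y(y)},
\end{equation*}
since the middle expression is a convex combination of the numbers $P_{Y|X}(y\,|\,x')$. Applying Step~1 to the maximizing $x'$ (which satisfies $P_{Y|X}(y\,|\,x')>0$ because $P_Y(y)>0$) bounds the last quantity by $m_X^{-1}$ exactly as in Step~2. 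Taking the maximum of the bounds from Steps~2 and~3 yields the claim. I do not anticipate a genuine obstacle here; the only care needed is the bookkeeping of the degenerate cases ($P_Y(y)=0$, vanishing conditional columns, $m_X=0$), all of which are handled by inspection.
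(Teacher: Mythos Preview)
Your proof is correct and follows essentially the same elementary route as the paper: both reduce the question to bounding $P_{Y|X}(y\mid x')/P_Y(y)$ (equivalently $P_{X|Y}(x'\mid y)/P_X(x')$) by $m_X^{-1}$ and then handle the $S$-term by averaging over $x'$. The paper phrases the averaging step via the mediant inequality $\frac{\sum_i a_i}{\sum_i b_i}\le\max_i\frac{a_i}{b_i}$ applied to $P_{S,Y}(s,y)/(P_S(s)P_Y(y))=\sum_x P_{S,X,Y}(s,x,y)/\sum_x P_S(s)P_{X,Y}(x,y)$, whereas you phrase it as a convex combination bound on $P_{Y|S}(y\mid s)$; these are the same observation in slightly different clothing.
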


\begin{proof}
Recall that
\begin{equation}
    \frac{\sum_i a_i}{\sum_i b_i} \leq \max_i \frac{a_i}{b_i},
\end{equation}
whenever $a_i \geq 0$ and $b_i>0$. For a given $y\in\mathcal{Y}$, let $\mathcal{X}_y \defined \{x\in\mathcal{X} : P_{X,Y}(x,y)>0\}$. Note that, given $s\in\mathcal{S}$ and $y\in\mathcal{Y}$,
\begin{align}
\frac{P_{S,Y}(s,y)}{P_S(s)P_Y(y)} &= \frac{\sum_{x\in\mathcal{X}_y} P_{S,X,Y}(s,x,y) }{\sum_{x\in\mathcal{X}_y} P_S(s)P_{X,Y}(x,y)}\\
&\leq \max_{x\in\mathcal{X}_y} \frac{P_{S,X,Y}(s,x,y)}{P_S(s)P_{X,Y}(x,y)}\\
&= \max_{x\in\mathcal{X}_y} \frac{P_{S,X}(s,x)}{P_S(s)P_{X}(x)}\\
&\leq \max_{x\in\mathcal{X}} \frac{1}{P_X(x)} = \left(\min_{x\in\mathcal{X}} P_X(x)\right)^{-1},
\end{align}
where the last inequality follows from the fact that $\mathcal{X}_y\subseteq\mathcal{X}$ and $P_{S,X}(s,x) \leq P_S(s)$ for all $s\in\mathcal{S}$ and $x\in\mathcal{X}$. The rest of the lemma is similar.
\end{proof}

\begin{proof}[\bf Proof of Lemma~\ref{lem:rob_f_inf1}]
First, let's assume that $m_S < \delta$. Let $\mathcal{S}_i = \{s\in\mathcal{S} : P_{S_i}(s) < \delta\}$ for each $i\in\{1,2\}$ and $\mathcal{S}_+ = \mathcal{S}\setminus(\mathcal{S}_1\cup\mathcal{S}_2)$. By the definition of $f$-information and the triangle inequality, we have that
\begin{equation}
    \Delta_L = |I_f(P_{S_1,Y_1})-I_f(P_{S_2,Y_2})| \leq {\rm I} + {\rm II} + {\rm III},
\end{equation}
where
\begin{align}
    {\rm I} =& \sum_{s\in\mathcal{S}_1} \sum_{y\in\mathcal{Y}}P_{S_1}(s) P_{Y_1}(y) \left|f\left(\frac{P_{S_1,Y_1}(s,y)}{P_{S_1}(s)P_{Y_1}(y)}\right)\right|\\
    &+\sum_{s\in\mathcal{S}_1} \sum_{y\in\mathcal{Y}}P_{S_2}(s) P_{Y_2}(y) \left|f\left(\frac{P_{S_2,Y_2}(s,y)}{P_{S_2}(s)P_{Y_2}(y)}\right)\right|,
\end{align}
\begin{align}
    {\rm II} =& \sum_{s\in\mathcal{S}_2} \sum_{y\in\mathcal{Y}} P_{S_1}(s) P_{Y_1}(y) \left|f\left(\frac{P_{S_1,Y_1}(s,y)}{P_{S_1}(s)P_{Y_1}(y)}\right)\right|\\
    &+\sum_{s\in\mathcal{S}_2} \sum_{y\in\mathcal{Y}} P_{S_2}(s) P_{Y_2}(y) \left|f\left(\frac{P_{S_2,Y_2}(s,y)}{P_{S_2}(s)P_{Y_2}(y)}\right)\right|,
\end{align}
\begin{align}
    {\rm III} = \bigg| \sum_{s\in\mathcal{S}_+} \sum_{y\in\mathcal{Y}} &\left(P_{S_1}(s) P_{Y_1}(y) f\left(\frac{P_{S_1,Y_1}(s,y)}{P_{S_1}(s)P_{Y_1}(y)}\right) \right. \\
    & \left.- P_{S_2}(s) P_{Y_2}(y) f\left(\frac{P_{S_2,Y_2}(s,y)}{P_{S_2}(s)P_{Y_2}(y)}\right) \right)\bigg|.
\end{align}
By Lemma~\ref{lem:auxiliary_mx}, we have that
\begin{equation}
    \max\left\{\frac{P_{S_1,Y_1}(s,y)}{P_{S_1}(s)P_{Y_1}(y)},\frac{P_{S_2,Y_2}(s,y)}{P_{S_2}(s)P_{Y_2}(y)}\right\} \leq m_X^{-1}.
\end{equation}
In particular, we have that
\begin{equation}
\label{equa:BoundA}
    {\rm I} \leq K_{f,m_X} (P_{S_1}(\mathcal{S}_1)+P_{S_2}(\mathcal{S}_1)).
\end{equation}
Since $P_{S_1}(\mathcal{S}_1) + P_{S_2}(\mathcal{S}_1) = P_{S_2}(\mathcal{S}_1)-P_{S_1}(\mathcal{S}_1) + 2 P_{S_1}(\mathcal{S}_1)$, the definition of $\mathcal{S}_1$ implies that
\begin{equation}
\label{equa:BoundA1}
    P_{S_1}(\mathcal{S}_1) + P_{S_2}(\mathcal{S}_1) \leq \frac{1}{2}\|P_{S_1}-P_{S_2}\| + 2 |\mathcal{S}| \delta.
\end{equation}
Note that
\begin{equation}
\label{equa:DPITV}
    \max\{\|P_{S_1}-P_{S_2}\|,\|P_{X_1}-P_{X_2}\|\} \leq \|P_{S_1,X_1} - P_{S_2,X_2}\|.
\end{equation}
Hence, \eqref{equa:BoundA} and \eqref{equa:BoundA1} lead to
\begin{equation}
    {\rm I} \leq 2 K_{f,m_X} |\mathcal{S}| \delta + \frac{K_{f,m_X}}{2} \|P_{S_1,X_1}-P_{S_2,X_2}\|.
\end{equation}
Using a similar argument, we conclude that
\begin{equation}
    {\rm II} \leq 2 K_{f,m_X} |\mathcal{S}| \delta + \frac{K_{f,m_X}}{2} \|P_{S_1,X_1}-P_{S_2,X_2}\|.
\end{equation}
By the triangle inequality ${\rm III} \leq {\rm III}_1 + {\rm III}_2$, where
\begin{align}
    \nonumber {\rm III}_1 = \sum_{s\in\mathcal{S}_+} \sum_{y\in\mathcal{Y}} &|P_{S_1}(s) P_{Y_1}(y) - P_{S_2}(s) P_{Y_2}(y)|\\
    \label{eq:DefIII1} &\times\left|f\left(\frac{P_{S_1,Y_1}(s,y)}{P_{S_1}(s)P_{Y_1}(y)}\right)\right|,
\end{align}
\begin{align}
    \nonumber {\rm III}_2 = \sum_{s\in\mathcal{S}_+} &\sum_{y\in\mathcal{Y}} P_{S_2}(s) P_{Y_2}(y)\\ \label{eq:DefIII2} &\times\left|f\left(\frac{P_{S_1,Y_1}(s,y)}{P_{S_1}(s)P_{Y_1}(y)}\right) - f\left(\frac{P_{S_2,Y_2}(s,y)}{P_{S_2}(s)P_{Y_2}(y)}\right)\right|.
\end{align}
By the definition of $\mathcal{S}_+$, we have that, for all $s\in\mathcal{S}_+$ and $y\in\mathcal{Y}$,
\begin{align}
\frac{P_{S_1,Y_1}(s,y)}{P_{S_1}(s)P_{Y_1}(y)} &= \frac{P_{S_1|Y_1}(s|y)}{P_{S_1}(s)}\\
&\leq \frac{1}{P_{S_1}(s)} \leq \delta^{-1}.
\end{align}
Recall that $|f(x)|\leq K_{f,\delta}$ for all $x\in [0,\delta^{-1}]$. Hence,
\begin{equation}
\begin{aligned}
    {\rm III}_1 
    &\leq K_{f,\delta} \sum_{s\in\mathcal{S}_+} \sum_{y\in\mathcal{Y}} |P_{S_1}(s) P_{Y_1}(y) - P_{S_2}(s) P_{Y_2}(y)|\\
    &\leq K_{f,\delta} \sum_{s\in\mathcal{S}} \sum_{y\in\mathcal{Y}} \left( P_{Y_1}(y)|P_{S_1}(s) - P_{S_2}(s)|\right.\\
    &\quad \quad \quad \quad \quad \quad \quad \quad\left.
    +P_{S_2}(s)|P_{Y_1}(y)-P_{Y_2}(y)|\right)\\
    &= K_{f,\delta} \left(\| P_{S_1} - P_{S_2} \| + \| P_{Y_1} - P_{Y_2}\|\right).
\end{aligned}
\end{equation}
The data processing inequality implies that
\begin{equation}
\|P_{Y_1}-P_{Y_2}\| \leq \|P_{X_1}-P_{X_2}\|,
\end{equation}
and by \eqref{equa:DPITV} we obtain that
\begin{equation}
    {\rm III}_1 \leq 2 K_{f,\delta} \| P_{S_1,X_1} - P_{S_2,X_2} \|.
\end{equation}
Similarly, for all $s\in\mathcal{S}_+$ and $y\in\mathcal{Y}$ we have that
\begin{equation}
    \max \left\{\frac{P_{S_1,Y_1}(s,y)}{P_{S_1}(s)P_{Y_1}(y)},\frac{P_{S_2,Y_2}(s,y)}{P_{S_2}(s)P_{Y_2}(y)}\right\} \leq \delta^{-1}.
\end{equation}
Recall that $f$ is Lipschitz on $[0,\delta^{-1}]$ and $L_{f,\delta}$ is its Lipschitz constant. In particular,
\begin{equation}
\label{equa:BoundD}
\begin{aligned}
    {\rm III}_2 \leq L_{f,\delta} \sum_{s\in\mathcal{S}_+} \sum_{y\in\mathcal{Y}} &\frac{1}{P_{S_1}(s)P_{Y_1}(y)}\\ \times&|P_{S_2}(s)P_{Y_2}(y)P_{S_1,Y_1}(s,y)\\
    &-P_{S_1}(s)P_{Y_1}(y)P_{S_2,Y_2}(s,y)|.
\end{aligned}
\end{equation}
By the triangle inequality,
\begin{equation}
|P_{S_2}(s)P_{Y_2}(y)P_{S_1,Y_1}(s,y)-P_{S_1}(s)P_{Y_1}(y)P_{S_2,Y_2}(s,y)|
\end{equation}
is upper bounded by
\begin{equation}
\begin{aligned}
    &P_{S_1,Y_1}(s,y) |P_{S_2}(s)P_{Y_2}(y)-P_{S_1}(s)P_{Y_1}(y)|\\
    & \quad + P_{S_1}(s)P_{Y_1}(y) |P_{S_1,Y_1}(s,y)-P_{S_2,Y_2}(s,y)|.
\end{aligned}
\end{equation}
Since $P_{S_1,Y_1}(s,y) \leq P_{Y_1}(y)$ for all $s\in\mathcal{S}$ and $y\in\mathcal{Y}$, \eqref{equa:BoundD} leads to
\begin{equation}
\begin{aligned}
    {\rm III}_2 
    &\leq L_{f,\delta} \sum_{s\in\mathcal{S}_+} \sum_{y\in\mathcal{Y}} \frac{1}{P_{S_1}(s)} |P_{S_2}(s)P_{Y_2}(y)-P_{S_1}(s)P_{Y_1}(y)|\\
    &\quad 
    + L_{f,\delta} \sum_{s\in\mathcal{S}_+} \sum_{y\in\mathcal{Y}} |P_{S_1,Y_1}(s,y)-P_{S_2,Y_2}(s,y)|\\
    &\leq \delta^{-1} L_{f,\delta} \Big(\|P_{Y_1}-P_{Y_2}\|+\|P_{S_1}-P_{S_2}\|\Big)\\
    &\quad + L_{f,\delta}\|P_{S_1,Y_1}-P_{S_2,Y_2}\|.
\end{aligned}
\end{equation}
By assumption, $P_{Y_1|X_1} = P_{Y_2|X_2}$ and hence
\begin{equation}
\|P_{S_1,Y_1} - P_{S_2,Y_2}\| \leq \|P_{S_1,X_1} - P_{S_2,X_2}\|.
\end{equation}
Therefore,
\begin{equation}
    {\rm III}_2 
    \leq \left(2\delta^{-1} + 1\right) L_{f,\delta} \|P_{S_1,X_1} - P_{S_2,X_2}\|.
\end{equation}
Since $\Delta_L \leq {\rm I} + {\rm II} + {\rm III}_1 + {\rm III}_2$, we obtain the upper bound
\begin{equation}
    \Delta_L \leq 4 K_{f,m_X} |\mathcal{S}| \delta + B_{f,\delta} \|P_{S_1,X_1} - P_{S_2,X_2}\|,
\end{equation}
where $B_{f,\delta} = K_{f,m_X} + 2K_{f,\delta} + (2\delta^{-1} + 1) L_{f,\delta}$.

Now assume that $\delta \leq m_S$. In particular, we have that $\mathcal{S}_1 = \mathcal{S}_2 = \emptyset$ and hence $\Delta_L \leq {\rm III}_1 + {\rm III}_2$ with ${\rm III}_1$ and ${\rm III}_2$ defined as in \eqref{eq:DefIII1} and \eqref{eq:DefIII2}, respectively. By Lemma~\ref{lem:auxiliary_mx},
\begin{equation}
\max\left\{\frac{P_{S_1,Y_1}(s,y)}{P_{S_1}(s)P_{Y_1}(y)},\frac{P_{S_2,Y_2}(s,y)}{P_{S_2}(s)P_{Y_2}(y)}\right\} \leq m_X^{-1}.
\end{equation}
In particular, we have that
\begin{equation}
    {\rm III}_1 \leq 2 K_{f,m_X} \|P_{S_1,X_1} - P_{S_2,X_2}\|.
\end{equation}
\begin{equation}
    {\rm III}_2 
    \leq \left(2m_S^{-1} + 1\right) L_{f,m_X} \|P_{S_1,X_1} - P_{S_2,X_2}\|.
\end{equation}
Hence,
\begin{equation}
    \Delta_L \leq C_{f,m_S} \|P_{S_1,X_1} - P_{S_2,X_2}\|,
\end{equation}
where $C_{f,m_S} = 2 K_{f,m_X} + (2m_S^{-1} + 1) L_{f,m_X}$.

Mutatis mutandis, setting $\mathcal{X}_i = \{x\in\mathcal{X} : P_{X_i}(x) < \delta\}$ for each $i\in\{1,2\}$ and $\mathcal{X}_+ = \mathcal{X}\setminus(\mathcal{X}_1\cup\mathcal{X}_2)$, it can be shown that
\begin{equation}
    \Delta_U \leq 4 K_{f,m_X} |\mathcal{X}| \delta + B_{f,\delta} \|P_{S_1,X_1}-P_{S_2,X_2}\|,
\end{equation}
when $m_X < \delta$. If $\delta \leq m_X$, it can be shown that
\begin{equation}
    \Delta_U \leq C_{f,m_X} \|P_{S_1,X_1} - P_{S_2,X_2}\|,
\end{equation}
where $C_{f,m_X} = 2 K_{f,m_X} + (2m_X^{-1} + 1) L_{f,m_X}$.
\end{proof}
\section{Proof of Lemma~\ref{lem:cut_lesslikely_symbols}}
\label{Appendix:Proofs_lem:cut_lesslikely_symbols}

\begin{proof}[\bf Proof of Lemma~\ref{lem:cut_lesslikely_symbols}]
First, we define $\mathcal{X}_1\defined \{x\in\calX : \hat{P}_{X}(x) \geq \gamma\}$. By the construction of $X_0$, we have that $P_{X_0|X}(x'|x) = 1$ whenever $x\in\calX_1$ and $x'=x$, or $x\in\calX_1^c$ and $x'=x_0$; in all other cases $P_{X_0|X}(x'|x) = 0$. In particular, 
\begin{equation}
\label{eq:ProofLemma2Marginals}
P_{X_0}(x_0)=\sum_{x\in\mathcal{X}_1^c} P_X(x) \text{ and } P_{X_0}(x) = P_X(x) \text{ for } x\in\calX_1.
\end{equation}
By the law of total probability and Markovianity
\begin{equation}
    P_{X,Y_0}(x,y) = \sum_{x'\in\calX_1\cup\{x_0\}} P_X(x) P_{X_0|X}(x'|x) P_{Y_0|X_0}(y|x'),
\end{equation}
for all $x\in\calX$ and $y\in\calY$. In particular, for all $x\in\calX_1$,
\begin{equation}
\label{eq:ProofLemma2PartI}
    P_{X,Y_0}(x,y) = P_X(x) P_{Y_0|X_0}(y|x).
\end{equation}
Similarly, for $x\in\calX_1^c$,
\begin{equation}
\label{eq:ProofLemma2PartII}
    P_{X,Y_0}(x,y) = P_X(x) P_{Y_0|X_0}(y|x_0).
\end{equation}
By the definition of $f$-information, \eqref{eq:ProofLemma2Marginals}, \eqref{eq:ProofLemma2PartI}, and \eqref{eq:ProofLemma2PartII},
\begin{align}
I_f(P_{X,Y_0}) = &\sum_{x\in \mathcal{X}_1} \sum_{y\in \mathcal{Y}} P_{X}(x)P_{Y_0}(y)f\left(\frac{P_{X,Y_0}(x,y)}{P_{X}(x)P_{Y_0}(y)}\right)\\
&+\sum_{x\in \mathcal{X}_1^c} \sum_{y\in \mathcal{Y}} P_{X}(x)P_{Y_0}(y)f\left(\frac{P_{X,Y_0}(x,y)}{P_{X}(x)P_{Y_0}(y)}\right)\\
= &\sum_{x\in \mathcal{X}_1} \sum_{y\in \mathcal{Y}} P_{X_0}(x)P_{Y_0}(y)f\left(\frac{P_{X_0,Y_0}(x,y)}{P_{X_0}(x)P_{Y_0}(y)}\right)\\
&+ \sum_{y\in \mathcal{Y}} P_{X_0}(x_0)P_{Y_0}(y)f\left(\frac{P_{X_0,Y_0}(x_0,y)}{P_{X_0}(x_0)P_{Y_0}(y)}\right)\\
=& I_f(P_{X_0,Y_0}),
\end{align}
as required.
\end{proof}
\section{Proof of Theorem~\ref{Thm:Delta_gap}}
\label{Appendix:ProofRobustness}

\begin{proof}[\bf Proof of Theorem~\ref{Thm:Delta_gap}]
First we show that, for all $r$ and $\epsilon$ with $P \in \mathbb{Q}_r(\hat{P})$,
\begin{equation}
\label{equa:bound_M_1}
\mathsf{H}(P;\epsilon) \leq \mathsf{H}(\hat{P};\epsilon+C_L r^\alpha) + C_U r^\alpha.
\end{equation}
Since, for fixed $P$, both $\mathcal{L}(P,\cdot)$ and $\mathcal{U}(P,\cdot)$ are continuous, there exists $F\in\mathbb{F}$ such that $\mathcal{L}(P,F) \leq \epsilon$ and 
\begin{equation}
\mathsf{H}(P;\epsilon) = \mathcal{U}(P,F).
\end{equation}
By assumption, we have that
\begin{equation}
|\mathcal{U}(\hat{P},F)-\mathcal{U}(P,F)| \leq C_U \|\hat{P}-P\|^\alpha \leq C_U r^\alpha.
\end{equation}
In particular,
\begin{equation}
\label{equa:bound_M}
\mathsf{H}(P;\epsilon) \leq \mathcal{U}(\hat{P},F) + C_U r^\alpha.
\end{equation}
Similarly, since
\begin{equation}
|\mathcal{L}(\hat{P},F)-\mathcal{L}(P,F)| \leq C_L\|\hat{P}-P\|^\alpha \leq C_L r^\alpha,
\end{equation}
we have
\begin{equation}
\label{equa:bound_F_P}
\mathcal{L}(\hat{P},F) \leq \mathcal{L}(P,F)+ C_L r^\alpha \leq \epsilon + C_L r^\alpha.
\end{equation}
Therefore, from inequality (\ref{equa:bound_M}) and Definition \ref{defn:PUT}, we have 
\begin{equation}
\mathsf{H}(P;\epsilon) \leq \mathsf{H}(\hat{P};\epsilon + C_L r^\alpha) + C_U r^\alpha.
\end{equation}

Next, we prove that
\begin{equation}
\label{equa:bound_R_2}
\mathsf{H}(\hat{P};\epsilon - C_Lr^\alpha)-C_Ur^\alpha \leq \mathcal{U}(P,F^*)
\end{equation}
where we denote $P_{Y|X}^{*}$ by $F^*$. Let $F_0\in\mathbb{F}$ be such that $\mathcal{L}(\hat{P},F_0)\leq \epsilon-C_Lr^\alpha$ and
\begin{equation}
\mathcal{U}(\hat{P},F_0)=\mathsf{H}(\hat{P};\epsilon-C_Lr^\alpha).
\end{equation}
Since for a fixed $\hat{P}$, both $\mathcal{U}(\hat{P},\cdot)$ and $\mathcal{L}(\hat{P},\cdot)$ are continuous, there exists at least one such $F_0$. By assumption, we have for any $Q\in\mathbb{Q}_r(\hat{P})$,
\begin{equation}
|\mathcal{L}(\hat{P},F_0)-\mathcal{L}(Q,F_0)|\leq C_L\|\hat{P}-Q\|^\alpha \leq C_Lr^\alpha.
\end{equation}
In particular, we have that
$\mathcal{L}(Q,F_0) \leq \mathcal{L}(\hat{P},F_0)+C_L r^\alpha \leq \epsilon$. Hence, $F_0\in \mathbb{D}_{\mathbb{Q}}(\hat{P};\epsilon,r)$ and, from the lower bound in \eqref{eq:LowerBoundR},
\begin{equation}
\label{eq:InqThm22}
\inf_{Q\in \mathbb{Q}_r(\hat{P})} \mathcal{U}(Q,F_0) \leq  \mathcal{U}(P,F^*).
\end{equation}
Since, by assumption, 
\begin{equation}
|\mathcal{U}(\hat{P},F_0)-\mathcal{U}(Q,F_0)| \leq C_U \|\hat{P}-Q\|^\alpha\leq C_Ur^\alpha,
\end{equation} 
we have that 
\begin{align}
\mathcal{U}(Q,F_0) &\geq \mathcal{U}(\hat{P},F_0)-C_Ur^\alpha\\
&= \mathsf{H}(\hat{P};\epsilon-C_Lr^\alpha)-C_Ur^\alpha.
\end{align}
In particular, this implies that
\begin{equation}
\label{eq:InqThm21}
\inf_{Q\in \mathbb{Q}_r(\hat{P})} \mathcal{U}(Q,F_0) \geq \mathsf{H}(\hat{P};\epsilon-C_Lr^\alpha)-C_U r^\alpha.
\end{equation}
Combining \eqref{eq:InqThm21} and \eqref{eq:InqThm22}, inequality~(\ref{equa:bound_R_2}) holds. Combining (\ref{equa:bound_M_1}) and (\ref{equa:bound_R_2}) together, we get the desired conclusion.
\end{proof}
\end{document}